\def\nfoldV{\mathbf{V}}
\newcommand{\appref}[1]{\hyperref[#1]{{Appendix~\ref*{#1}}}}
\newcommand{\be}{\begin{eqnarray} \begin{aligned}}
\newcommand{\ee}{\end{aligned} \end{eqnarray} }
\newcommand{\benn}{\begin{eqnarray*} \begin{aligned}}
\newcommand{\eenn}{\end{aligned} \end{eqnarray*}}
\newcommand*{\cC}{\mathcal{C}}
\newcommand*{\cE}{\mathcal{E}}
\newcommand*{\cI}{\mathcal{I}}
\newcommand*{\cL}{\mathcal{L}}
\newcommand*{\cM}{\mathcal{M}}
\newcommand*{\cN}{\mathcal{N}}
\newcommand*{\cD}{\mathcal{D}}
\newcommand*{\cQ}{\mathcal{Q}}
\newcommand*{\cP}{\mathcal{P}}
\newcommand{\bc}{\begin{center}}
\newcommand{\ec}{\end{center}}
\newcommand{\Tr}{\mathop{\mathrm{tr}}\nolimits}
\newtheorem{theorem}{Theorem}[section]
\newtheorem{lemma}[theorem]{Lemma}
\newcommand{\ket}[1]{|#1\rangle}
\newcommand{\bra}[1]{\langle#1|}
\def\>{\rangle}
\def\<{\langle}
\newcommand{\proj}[1]{|#1\rangle\!\langle#1|}
\newcommand{\braket}[2]{\langle #1|#2\rangle}
\begin{document}

\title{Maximal Privacy Without Coherence}
 \author{Debbie Leung}
 \affiliation{Institute for Quantum Computing, University of Waterloo, Waterloo, N2L 3G1, ON, Canada}
 \author{Ke Li}
 \affiliation{IBM TJ Watson Research Center, Yorktown Heights, NY 10598, USA}
 \affiliation{Center for Theoretic Physics,  Massachusetts Institute of Technology,
   Cambridge, MA, USA}
 \author{Graeme Smith}
 \affiliation{IBM TJ Watson Research Center, Yorktown Heights, NY 10598, USA}
 \author{John A. Smolin}
 \affiliation{IBM TJ Watson Research Center, Yorktown Heights, NY 10598, USA}

\date{\today}

\begin{abstract}
  Privacy lies at the fundament of quantum mechanics.  A
  coherently transmitted quantum state is inherently private.
  Remarkably, coherent quantum communication is not a prerequisite for
  privacy: there are quantum channels that are too noisy to transmit
  any quantum information reliably that can nevertheless send private
  classical information.  Here, we ask how much private classical
  information a channel can transmit if it has little quantum
  capacity.  We present a class of channels $\cN_d$ with input dimension
  $d^2$, quantum capacity $Q(\cN_d) \leq 1$ and private capacity
  $P(\cN_d) = \log d$.  These channels asymptotically
  saturate an interesting inequality $P(\cN) \leq \frac{1}{2}
  (\log d_A + Q(\cN))$ for any channel $\cN$ with input dimension $d_A$,
  and capture the essence of
  privacy stripped of the confounding influence of coherence.
\end{abstract}

\maketitle

Any communication link can be modeled as a (noisy) quantum channel.
Given a sender, Alice, and a receiver, Bob, a quantum channel from Alice
to Bob is a completely positive trace preserving map from an input
space $A$ to an output space $B$. The capability of a quantum channel
for communication is measured by various capacities, one for each
type of information to be transmitted.  The classical capacity $C(\cN)$
quantifies the capability of a quantum channel $\cN$ for
transmitting classical information, in bits per channel use. The private
capacity, $P(\cN)$, gives the maximum rate of private classical
communication and quantifies the optimal performance for key exchange.
Finally, the quantum capacity $Q(\cN)$, measured in
qubits per channel use, establishes the ultimate limit for
transmitting quantum information and the performance of quantum error
correction.

The three capacities mentioned above clearly satisfy $\cQ(\cN) \leq
\cP(\cN) \leq \cC(\cN)$.
% DL: just too obvious
% Afterall, any private classical transmission gives an achievable rate
% for classical communication without a privacy requirement, so we
% clearly have $P(\cN) \leq C(\cN)$.  Furthermore, coherent transmission
% of a quantum state implies privacy so that we also have $Q(\cN) \leq
% P(\cN)$.
The analogy between coherent transmission and privacy, and between
entanglement and a private key strongly suggest that $\cQ(\cN) =
\cP(\cN)$.  Surprisingly, it was shown in \cite{HHHO03} that not only
can $P$ and $Q$ differ, there are channels too noisy to transmit
\emph{any} quantum information (that is, $Q(\cN)= 0$) but that can
nevertheless be used to establish privacy ($P(\cN)>0$).  The central
idea of \cite{HHHO03} concerns \emph{private states} that by their
structure embody perfectly secure classical key, much as maximally
entangled pure states embody perfectly coherent correlation.

While \cite{HHHO03} draws a qualitative distinction between the
private and the quantum capacities, it remains unclear how big the
difference can be.  If the capacities were always close, then privacy
and coherence would still be closely related concepts and the
distinction would have little practical relevance.  Our contribution
is to present a class of channels with $Q(\cN_d) \leq 1$ and $P(\cN_d) =
\log d$, where $d^2$ is the input dimension.  We further
establish that such a separation is tight, by proving an inequality
\begin{align}\label{Eq:UBQ}
P(\cN) \leq \frac{1}{2}\left(\log d_A + Q(\cN)\right) \,,
\end{align}
for any channel $\cN$ with input dimension $d_A$,
quantifying the effect of incoherence in the channel transmission
on privacy: inasmuch as a channel cannot simply transmit quantum
information, it must devote half of its input space to acting as a
``shield'' system (as defined in \cite{HHHO03}).  While (\ref{Eq:UBQ})
can be inferred from properties of squashed entanglement of
quantum states \cite{Christandl-thesis,CSW10}, this particular form
appears to be new.  Our relatively simple proof involves very
different techniques.

As an aside, our channels combine features of private states from
\cite{HHHO03} and retro-correctable/random-phase-coupling
channels of \cite{BDSS06,SS09,LWZG09,SS09a} (these channels have
large assisted capacities but small $C$, $P$, and $Q$).
In addition to finding a very tight bound on $Q(\cN_d)$, we can also compute
both $P(\cN_d)$ and $C(\cN_d)$, a relative rarity in quantum information,
especially for a highly nontrivial channel.

% \section{Upper Bound on Privacy}

\vspace*{2ex}
\noindent {\bf Upper Bound on Privacy}
\vspace*{1ex}

Recall that any quantum channel
can be expressed as an isometry followed by a partial trace,
$\cN(\rho) = \Tr_E U \rho U^\dagger $, where
$U \, {:} \, A \, {\rightarrow} \,BE$
with $U^\dagger U = I$.  The complementary channel acts as
$\widehat{\cN}(\rho) = \Tr_B U \rho U^\dagger$, and allows us to
define the \emph{coherent information} of a channel as
\[
Q^{(1)}(\cN) = \max_{\phi_A} I_{\rm coh}(\cN,\phi_A) \;
 {:}{=} \; \max_{\phi_A} \left[S(B)-S(E)\right],
\]
where the maximization is taken over input quantum states $\phi_A$,
and $S(B)$, $S(E)$ are the von Neumann entropies of $\rho_B = \cN(\phi_A)$
and $\rho_E = \widehat{\cN}(\phi_A)$ respectively.
In turn, the quantum capacity is proved \cite{Lloyd97,Shor02,D03} to
be the {\em regularized} coherent information: $Q(\cN) =
\lim_{n\rightarrow \infty}\frac{1}{n}Q^{(1)}(\cN^{\otimes n})$.
We say that a quantum channel $\cN$ is \emph{degradable} if
$\widehat{\cN} = \cD \circ \cN$ for some channel $\cD$
\cite{DS03} ($\cN$ can be
degraded to generate $\widehat{\cN}$).  For degradable channels,
$P(\cN) = Q(\cN) =
Q^{(1)}(\cN)$ \cite{smith2008private}.
Degradable channels also provide a powerful tool for upper bounding
the capacities of general channels \cite{AE08}.
If a
channel $\cN = \cL\circ \cM$ is a composition of two channels $\cL$
and $\cM$ with $\cM$ degradable, we have
{\small
\begin{equation}
Q(\cN) \leq P(\cN) = P\left(\cL \, {\circ} \, \cM \right)
\leq P(\cM) =  Q^{(1)}(\cM)\,.
\label{eq:longchain}
\end{equation}
}
We now have all the tools for proving Eq.~(\ref{Eq:UBQ}).
For any channel $\cN$, define $\cM$ as
\begin{align}
\cM(\rho) = \frac{1}{2}\left[\rho \otimes \proj{0} + \cN(\rho) \otimes \proj{1}\right].
\end{align}
Then, $\cN = \cL\circ \cM$ where $\cL(\sigma) =
( \cN \otimes \Pi_{0} +\cI \otimes \Pi_{1}) (\sigma)$ and
$\Pi_i(\mu) = \<i| \mu |i\>$.
To see that $\cM$ is degradable, note that the complementary channel of $\cM$ is
\begin{align}
\widehat{\cM}(\rho) = \frac{1}{2}\left[\proj{e} \otimes \proj{0} + \widehat{\cN}(\rho) \otimes \proj{1}\right],
\end{align}
where $\proj{e}$ is an orthogonal erasure flag.  Choose a degrading
map $\cD$ that first flips the flag qubit (the second register), and
then conditioned on the flag being $\ket{1}$ or $\ket{0}$, applies
$\widehat{\cN}$ to the first register or resets it to $\proj{e}$.
So, $\widehat{\cM} = \cD \circ \cM$.  Now,
applying Eq.~(\ref{eq:longchain}),
\begin{eqnarray}
 P(\cN) & \leq & Q^{(1)}(\cM)
 \nonumber \\ & = & \max_{\phi_A}\left[ S(B_1B_2)-S(E_1E_2)\right]
 \nonumber \\ & = & \max_{\phi_A}\frac{1}{2}\left[ S(\phi_A) + S\left(\cN(\phi_A)\right)-S ( \widehat{\cN}(\phi_A) )\right]
 \nonumber \\ & & \leq \frac{1}{2}\left[\log d_A + Q^{(1)}(\cN)\right].
 \nonumber \end{eqnarray}
This bound is in fact stronger than Eq.~(\ref{Eq:UBQ}), since $Q^{(1)}(\cN) \leq Q(\cN)$.

% \section{Channel Construction}

\vspace*{2ex}
\noindent {\bf Channel Construction}
\vspace*{1ex}

The family of
channels $\cN_d$ asymptotically saturating Eq.~(\ref{Eq:UBQ}) is
given by:
% pictured in Fig.~1.
% Fig.~\ref{Fig:Channel}.
\begin{equation}
% \begin{figure}[http]
% important to set length before centering
\setlength{\unitlength}{0.5mm}
\centering
\begin{picture}(130,58)
% \put(0,0){\framebox(130,60){}}
%\put(-30,30){\makebox(10,10){\large{Fig.~1}}}
\put(30,0){\dashbox(60,60){}}
\put(05,10){\makebox(15,10){\large{$A_2$}}}
\put(05,40){\makebox(15,10){\large{$A_1$}}}
\put(20,15){\line(1,0){18}}
\put(84,45){\line(1,0){16}}
\put(20,45){\line(1,0){44}}
\put(38,05){\framebox(20,20){\Large{$V$}}}
\put(58,15){\line(1,0){6}}
\put(64,05){\framebox(20,50){\Large{$P$}}}
\put(84,15){\line(1,0){16}}
\put(100,10){\makebox(30,10){\large{$E$,``$V$''}}}
\put(100,40){\makebox(30,10){\large{$B$,``$V$''}}}
\end{picture}
\label{eq:Channel}
% \label{Fig:Channel}
% \caption{The isometric extension of the channel $\cN$ is given by the operations in the dashed box.
%  $V$ is a random unitary drawn from the Clifford group.  The ``$V$'' notation indicates that a classical description of $V$ is given.}
% \end{figure}
\end{equation}
The isometric extension of the channel $\cN_d$ is given by the
operations in the dashed box.  $\cN_d$ has two input registers $A_1$ and
$A_2$, each of dimension $d$.  A random unitary $V$ is applied to
$A_2$, followed by a controlled phase gate $P = \sum_{i,j}
\omega^{ij}\proj{i}\otimes \proj{j}$ acting on $A_1 A_2$, where
$\omega$ is a primitive $d$th root of unity.  Bob receives only $A_1$
(now relabeled $B$) and ``$V$'', which denotes a classical register
with a description of $V$.  The $A_2$ register is discarded.  The
complementary channel has outputs $A_2$ and ``$V$''.  More
formally, let $W_{V} = P \, (I\otimes V)$, $\cN_{V}(\rho) = \Tr_E W_V\rho
W_V^\dagger$, and $\cN_d = \mathbb{E}_V \cN_{V} \otimes \proj{V}_{V_B}$,
where the register $V_B$ holds ``$V$''.  The isometric extension is
given by
\[U_d \ket{\psi}_{A_1A_2} = \sum_V \! \sqrt{{\rm pr}(V)} \,
(W_{V}\ket{\psi}_{A_1A_2}) \otimes \ket{V}_{V_B}\otimes
\ket{V}_{V_E}\]
and the complementary channel acts as $\hat{\cN}_d(\rho)
= \Tr_{BV_B}U_d \rho U_d^\dagger = \mathbb{E}_V \Tr_B W_V \rho W_V^\dagger
\otimes \proj{V}_{V_B}$.

Here is the intuition behind the construction:  The classical capacity
of this channel is at least $\log d$, since the $d$ computation
basis states of $A_1$ are transmitted
without error.  Furthermore, we will see that
% by using this encoding
% while
inserting a maximally mixed state into $A_2$ keeps the environment
ignorant of the transmitted message so $P(\cN_d)\geq \log
d$.  However, as the classical capacity is no greater than the output
entropy, and ``$V$'' is independent of the input, $C(\cN_d)\leq \log d$,
so, $C(\cN_d) = P(\cN_d) = \log d$.  However, the channel is
constructed to suppress the quantum capacity, since without knowing
$V$, Alice cannot avoid the $P$ gate from entangling $A_1$ with $A_2$,
thereby dephasing $A_1$.  We will prove $Q(\cN_d)\leq 1$.

Our proofs of the above statements hold for any
$V$ drawn from a so-called exact unitary $2$-design, and thus,
$V$ can be a random Clifford gate \cite{DLT01}.
% DL: Can't look up the size of the Clifford group quickly.
In our work to {\em lower} bound $Q(\cN)$,
a Haar distributed $V$ is analyzed as a first attempt.  We expect
a similar conclusion for random Clifford gate $V$.

\vspace*{2ex}
\noindent {\bf Private Capacity}
\vspace*{1ex}

%\section{Private Capacity}
For an ensemble $\cE = \{p_i,\phi_i\}$ and
channel $\cN$, the private information is defined as
\begin{align}
P^{(1)}(\cN,\cE) = \chi(\cN,\cE)-\chi(\widehat{\cN},\cE),
\end{align}
with Holevo information $\chi(\cN,\cE) = S(\rho) - \sum_i
p_iS(\rho_i)$ evaluated on the induced ensemble $\cN(\cE) =
\{p_i,\rho_i = \cN(\phi_i)\}$ and average state $\rho = \sum_i p_i
\rho_i$ (similarly for $\chi(\widehat{\cN},\cE)$). For any ensemble
$\cE$, $P^{(1)}(\cN,\cE)$ is an achievable rate for private
communication and thus a lower bound on $P(\cN)$ \cite{D03}.

For our channel $\cN_d$, choosing probabilities $p_i = \frac{1}{d}$ and
states $\phi_i = \proj{i}_{A_1}\otimes (\frac{I}{d})_{A_2}$ for $i = 1\dots d$,
gives $\chi(\cN_d,\cE) = \log d$ and $\chi(\widehat{\cN}_d,\cE) =
0$, so $P(\cN_d) \geq \log d$, as claimed.  Readers familiar with
\cite{HHHO03} will notice that, the Choi-state of $\cN_d$ with Alice
holding $R_1,R_2$ is an exact private state of key system $R_1 B$ and
shield $R_2$.

% In the appendix, we show this bound is tight: $P(\cN) = \log d$.

\vspace*{2ex}
\noindent {\bf Upper Bound on Quantum Capacity}
\vspace*{1ex}

% \section{Upper Bound on Quantum Capacity}
To get an upper bound on $Q(\cN_d)$, we consider the asymptotic
behavior of the coherent information, $Q^{(1)}(\cN_d^{\otimes n})$ for
arbitrarily large $n$.  We first define suitable notations.  We group
together the first input $A_1$ from all $n$ channel uses, call it
$A_1^n$, and we similarly define $A_2^n$, $B^n$, $V_B^n$, and $V_E^n$.
We use $\mathbf{x}$ to denote an $n$-tuple of integers $(x_1,x_2,\cdots,
x_n)$ where each $x_i$ has range $\{0,1,\cdots,d{-}1\}$, and similarly for
 $\mathbf{y}$.
Finally, a random $V$ is drawn from each channel use, and we
denote the tensor product of $n$ such independent and identically
drawn unitaries by $\nfoldV$.

We consider the $n$-shot coherent information
$Q^{(1)}(\cN_d^{\otimes n}) = \max_{\phi_{A_1^n A_2^n}}
  \left[S(B^n V_B^n)-S(E^n V_E^n)\right]$.  Since Bob and the environment
receives the same classical description ``$V$'',
$Q^{(1)}(\cN_d^{\otimes n}) = \max_{\phi_{A_1^n A_2^n}}
  \left[S(B^n|V_B^n)-S(E^n|V_E^n)\right]$.
First, we show that the optimal input state has a special form.
% can take the form $\phi_{A_1^nA_2^n} =
% \sum_{\mathbf{x}} \, p_{\mathbf{x}} \, \proj{\mathbf{x}}_{A_1^n}
% \otimes \proj{\varphi^{\mathbf{x}}}_{A_2^n}$.

% DL: I_coh now minimally defined near end of p1.

\begin{lemma}
\label{Lemma:StdForm}
For the channel $\cN_d$ of Eq~(\ref{eq:Channel}), the coherent information
$I_{\rm coh}(\cN_d^{\otimes n},\phi_{A_1^nA_2^n})$ is maximized on states of the form
\begin{align}
\phi_{A_1^nA_2^n} = \sum_{\mathbf{x}} p_{\mathbf{x}}\proj{\mathbf{x}}_{A_1^n} \otimes \proj{\varphi^{\mathbf{x}}}_{A_2^n},
\label{eq:optin}
\end{align}
where $\mathbf{x} = (x_1,\cdots, x_n)$ and
$\ket{\mathbf{x}} = \otimes_{i=1}^n\ket{x_i}$
is a standard basis state on $A_1^n$.
\end{lemma}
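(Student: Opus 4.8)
The plan is to establish the claimed form through two successive reductions: first that the optimal input may be taken diagonal (classical) on $A_1^n$ in the standard basis, and then that, conditioned on each value $\mathbf{x}$ of $A_1^n$, the state on $A_2^n$ may be taken pure. Both reductions exploit the diagonal structure of the controlled-phase gate $P$. The central symmetry is that the clock operator $Z=\sum_k \omega^k\proj{k}$ acting on $A_1$ commutes with the channel isometry: since $P$ is diagonal in the standard basis and $V$ acts only on $A_2$, one has $(Z^{\mathbf a}_{A_1^n}\otimes I_{A_2^n})\,W_{\nfoldV}=W_{\nfoldV}\,(Z^{\mathbf a}_{A_1^n}\otimes I_{A_2^n})$ for every tuple $\mathbf a$ and every $\nfoldV$, where $Z^{\mathbf a}_{A_1^n}=\bigotimes_i Z^{a_i}$.

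For the first reduction I would apply the dephasing twirl $\rho\mapsto \rho'=d^{-n}\sum_{\mathbf a}(Z^{\mathbf a}_{A_1^n}\otimes I)\,\rho\,(Z^{-\mathbf a}_{A_1^n}\otimes I)$, which kills all off-diagonal terms of $A_1^n$ and produces a state of the form $\sum_{\mathbf x}\proj{\mathbf x}_{A_1^n}\otimes\sigma_{\mathbf x}$. The point is that this can only help. Conditioned on any fixed $\nfoldV$, the commutation relation shows that conjugating the input by $Z^{\mathbf a}_{A_1^n}$ conjugates Bob's output $\rho_{B^n}^{\nfoldV}$ by the unitary $Z^{\mathbf a}_{A_1^n}$ (leaving $S(B^n\,|\,V_B^n)$ unchanged for each term), while it leaves the environment's output $\rho_{E^n}^{\nfoldV}$ \emph{exactly} invariant, because the factor $Z^{\mathbf a}_{A_1^n}$ acts on the traced-out register $B^n=A_1^n$ and disappears under $\Tr_{B^n}$. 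Hence averaging over $\mathbf a$ leaves $S(E^n\,|\,V_E^n)$ untouched, whereas Bob's conditional state becomes a standard-basis dephasing of the original, so $S(B^n\,|\,V_B^n)$ can only increase (dephasing is unital). Therefore $I_{\rm coh}(\cN_d^{\otimes n},\rho')\ge I_{\rm coh}(\cN_d^{\otimes n},\rho)$, and the optimum is attained on inputs classical on $A_1^n$.

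I expect this first reduction to be the main obstacle, precisely because coherent information is not concave in the input, so one cannot simply invoke concavity to justify the twirl. The argument succeeds only because of the asymmetric effect just described: the symmetry acts as an entropy-preserving unitary on Bob but acts \emph{trivially} on the environment after the partial trace, so the twirl strictly favours Bob's term without perturbing the subtracted term.

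For the second reduction I would restrict to inputs $\sum_{\mathbf x}p_{\mathbf x}\proj{\mathbf x}_{A_1^n}\otimes\tau_{\mathbf x}$ and compute the two entropies explicitly. Since the controlled-phase gate acts, conditioned on $A_1^n=\mathbf x$, as the diagonal unitary $Z^{\mathbf x}_{A_2^n}=\bigotimes_i(\sum_j\omega^{x_i j}\proj{j})$ on $A_2^n$, tracing out the environment leaves $\rho_{B^n V_B^n}=(\sum_{\mathbf x}p_{\mathbf x}\proj{\mathbf x})\otimes(\text{distribution of }\nfoldV)$, so $S(B^n\,|\,V_B^n)=H(\{p_{\mathbf x}\})$ depends only on the populations and not on the $\tau_{\mathbf x}$. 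The environment's term is $S(E^n\,|\,V_E^n)=\mathbb{E}_{\nfoldV}\,S\!\big(\sum_{\mathbf x}p_{\mathbf x}\,Z^{\mathbf x}_{A_2^n}\nfoldV\tau_{\mathbf x}\nfoldV^\dagger Z^{-\mathbf x}_{A_2^n}\big)$. Fixing $\{p_{\mathbf x}\}$ thus fixes the $B$-term, and maximizing coherent information reduces to minimizing this $E$-term over the tuple $(\tau_{\mathbf x})$. As a function of $(\tau_{\mathbf x})$ the argument of each entropy is affine, so the expectation of von Neumann entropies is concave; its minimum over the compact convex set of density-operator tuples is attained at an extreme point, i.e.\ at pure $\tau_{\mathbf x}=\proj{\varphi^{\mathbf x}}$. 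Optimizing over $\{p_{\mathbf x}\}$ afterwards preserves this form and yields exactly Eq.~(\ref{eq:optin}).
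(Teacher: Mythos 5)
Your proposal is correct and follows essentially the same route as the paper: your $Z^{\mathbf{a}}$-twirl on $A_1^n$ is exactly the paper's completely dephasing map $\cP^{\otimes n}$, justified by the same observation that it commutes with the channel, acts unitally on Bob's output, and leaves the environment's output untouched. Your second reduction is likewise the paper's argument in slightly different packaging — the paper decomposes each $\varphi^{\mathbf{x}}$ into pure states and uses concavity of the environment entropy to select a pure replacement, which is the same concavity-plus-extreme-points reasoning you invoke, with Bob's output being independent of the conditional states in both cases.
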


\begin{proof}
First, we show that the optimal state has the form
\begin{align}\label{Eq:CQmixed}
\sigma_{A^n_1A^n_2} = \sum_{\mathbf{x}} p_{\mathbf{x}}\proj{\mathbf{x}}_{A_1^n} \otimes \varphi^{\mathbf{x}}_{A_2^n},
\end{align}
where $\varphi^{\mathbf{x}}_{A_2^n}$ is potentially mixed.
To see this,
let $\psi_{A_1^nA_2^n}$ be an arbitrary input state,
and
$\sigma_{A_1^nA_2^n} = \left(\cP^{\otimes n}\otimes
I_{A_2^n}\right)(\psi_{A_1^nA_2^n})$ where
$\cP(\rho)
= \frac{1}{d}\sum_{i=0}^{d-1}Z_i \rho Z_i^\dagger$
is the completely dephasing map.  So, $\sigma_{A_1^nA_2^n}$
indeed has the form given by Eq.~(\ref{Eq:CQmixed}).
Now,
\[
\begin{split}
     I_{\rm coh}(\cN_d^{\otimes n}, \sigma_{A_1^nA_2^n})
% \nonumber \\
  =  I_{\rm coh}(\cN_d^{\otimes n} \circ (\cP^{\otimes n} \! \otimes \! I_{A_2^n}), \psi_{A_1^nA_2^n})
\nonumber \\
  =  I_{\rm coh}(\cP^{\otimes n} \circ \cN_d^{\otimes n}, \psi_{A_1^nA_2^n})
% \nonumber \\
\geq I_{\rm coh}(\cN_d^{\otimes n}, \psi_{A_1^nA_2^n})
\end{split}
\]
since $\cP$ commutes with $\cN_d$; and
$\cP$ is unital, thus the entropy cannot decrease.
Meanwhile the reduced state on $E^nV_E^n$ remains the same,
so the coherent information cannot decrease.

Next, we show that $\varphi^{\mathbf{x}}_{A_2^n}$ in
Eq.~(\ref{Eq:CQmixed}) can be taken to be pure.
Fix an arbitrary $\mathbf{x}$.
Let $\varphi^{\mathbf{x}}_{A_2^n} {=} \sum_w q(w|\mathbf{x})
\proj{\mu_w^{\mathbf{x}}}$, and for each $w$, let
\begin{align}
\eta_{A_1^nA_2^n}^{\mathbf{x},w} = p_{\mathbf{x}}\proj{\mathbf{x}}\otimes \proj{\mu_w^{\mathbf{x}}}
+ \sum_{\mathbf{y}\neq \mathbf{x}} p_{\mathbf{y}}\proj{\mathbf{y}}\otimes \varphi^{\mathbf{y}}_{A_2^n} \,.
\nonumber
\end{align}
We now show that
\begin{equation} \exists w'  ~{\rm s.t.}~
I_{\rm coh}(\cN_d^{\otimes n},\eta_{A_1^nA_2^n}^{\mathbf{x},w'})
\geq I_{\rm coh}(\cN_d^{\otimes n},\sigma_{A_1^nA_2^n}) \,.
\label{eq:purestateopt}
\end{equation}
To see this, note that
for each $\mathbf{x}$ and $w$,
$\cN^{\otimes n}_d\left( \sigma_{A_1^nA_2^n} \right)
= \cN^{\otimes n}_d \left( \eta_{A_1^nA_2^n}^{\mathbf{x},w}\right)$,
so those states have the same entropy.
For the complementary channel, observe that by construction,
\[
\sigma_{A_1^nA_2^n} = \sum_{w} q(w|\mathbf{x}) \, \eta_{A_1^nA_2^n}^{\mathbf{x},w}
\]
so $\hat{\cN}_d^{\otimes n}(\sigma_{A_1^nA_2^n})
= \sum_w q(w|\mathbf{x}) \; \hat{\cN}^{\otimes n}_d \! \left( \eta_{A_1^nA_2^n}^{\mathbf{x},w}\right)$,
and by concavity of entropy,
\[
\begin{split}
& S(E^n V^n_E)_{\hat{\cN}_d^{\otimes n}\left(\sigma_{A_1^nA_2^n}\right)} \\
& \geq \sum_w q(w|\mathbf{x})S(E^n V^n_E)_{\hat{\cN}^{\otimes n}_d
      \left( \eta_{A_1^nA_2^n}^{\mathbf{x},w}\right)}
\end{split}
\]
so Eq.~(\ref{eq:purestateopt}) holds.  Iterating this process
gives an optimal state of the form given by Eq.~(\ref{eq:optin}).
\end{proof}

The optimal input in Eq.~(\ref{eq:optin}) gives a conditional
output entropy of $S(B^n|V_B^n) = H(p_\mathbf{x}) = -
\sum_{\mathbf{x}}p_\mathbf{x}\log p_\mathbf{x}$.
We need to compare
this to the conditional output entropy of the environment,
$S(E^n|V_E^n)$.
To do so,
note that the controlled-phase gate for one channel can be expressed
as $P = \sum_i \proj{i} \otimes Z_i$ where $Z_i \ket{j} = \omega^{ij}
\ket{j}$.  For the $n$-tuple $\mathbf{x}$,
we define $Z^{\bf x} = Z_{x_1}\otimes ... \otimes Z_{x_n}$.
The output state on the environment is
\begin{align}
\rho_{E^n,V_E^n} = \mathbb{E}_{\nfoldV}\sum_{\mathbf{x}} \,
 p_{\mathbf{x}} \,
 \rho_{E^n}^{\mathbf{x},\nfoldV} \otimes \proj{\nfoldV}_{V_E^n},
\end{align}
where
\be
\rho_{E^n}^{\mathbf{x},\nfoldV} =
Z^{\mathbf{x}}\nfoldV\proj{\varphi^{\mathbf{x}}} \nfoldV^\dagger
Z^{-\mathbf{x}} \,.
\label{eq:outputcondv}
\ee
If, for fixed $\nfoldV$ and
$\mathbf{x}\neq \mathbf{y}$, $\rho_{E^n}^{\mathbf{x},\nfoldV}$ and
$\rho_{E^n}^{\mathbf{y},\nfoldV}$ were orthogonal, we would have
$S(E^n|V_E^n) = H(p_{\mathbf{x}})$ and
$Q^{(1)}(\cN_d^{\otimes n}) = 0$.  Instead, we
show that the Hilbert-Schmidt inner product
$\mathbb{E}_{\nfoldV} \Tr
\rho_{E^n}^{\mathbf{x},\nfoldV}\rho_{E^n}^{\mathbf{y},\nfoldV}$ is
low on average (over choice of $\nfoldV$):

\begin{lemma} \label{lem:littleoverlap}
The states $\rho^{\mathbf{x},\nfoldV}_{E^n}$ given by
Eq.~(\ref{eq:outputcondv}) satisfies
\[\mathbb{E}_{\nfoldV}\Tr \rho^{\mathbf{x},\nfoldV}_{E^n} \rho^{\mathbf{y},\nfoldV}_{E^n}
 \leq \frac{1}{(d-1)^{d_H(\mathbf{x},\mathbf{y})}} \,, \]
where $d_H(\mathbf{x},\mathbf{y}) = |\{i| x_i \neq y_i\}|$ is the
Hamming distance between $\mathbf{x}$ and $\mathbf{y}$.
\end{lemma}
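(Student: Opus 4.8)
The plan is to recognize the averaged Hilbert--Schmidt overlap as a second moment over the $2$-design and collapse it to a combinatorial sum. Since $\rho^{\mathbf{x},\nfoldV}_{E^n}$ and $\rho^{\mathbf{y},\nfoldV}_{E^n}$ are rank-one projectors, their overlap is $|\langle\varphi^{\mathbf{x}}|\nfoldV^\dagger Z^{\mathbf{z}}\nfoldV|\varphi^{\mathbf{y}}\rangle|^2$, where $Z^{\mathbf{z}}=Z^{-\mathbf{x}}Z^{\mathbf{y}}$ and $\mathbf{z}=\mathbf{y}-\mathbf{x}\pmod d$ has exactly $k:=d_H(\mathbf{x},\mathbf{y})$ nonzero entries. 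Writing $M=\nfoldV^\dagger Z^{\mathbf{z}}\nfoldV=\bigotimes_i M_i$ with $M_i=V_i^\dagger Z_{z_i}V_i$, I would expand the modulus squared on two copies, so that
\[
\mathbb{E}_{\nfoldV}\Tr\rho^{\mathbf{x},\nfoldV}_{E^n}\rho^{\mathbf{y},\nfoldV}_{E^n}
=\Tr\big[(\mathbb{E}_{\nfoldV}[M\otimes M^\dagger])\,\Phi\big],\quad
\Phi=\ketbra{\varphi^{\mathbf{y}}}{\varphi^{\mathbf{x}}}\otimes\ketbra{\varphi^{\mathbf{x}}}{\varphi^{\mathbf{y}}}.
\]
Because the $V_i$ are independent and each is drawn from a $2$-design, the expectation factorizes site-by-site into the Haar twirl $T_{z_i}=\mathbb{E}_{V_i}[M_i\otimes M_i^\dagger]$, which lies in the span of the identity and the swap $S$. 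Evaluating the twirl with $\Tr Z_{z_i}=0$ for $z_i\ne0$ gives $T_{z_i}=I$ when $z_i=0$ and $T_{z_i}=\tfrac{d}{d^2-1}S-\tfrac{1}{d^2-1}I$ when $z_i\ne0$.

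Next I would expand $\bigotimes_i T_{z_i}$ over the support $S_0=\{i:z_i\ne0\}$ into partial swaps $S_R$ ($R\subseteq S_0$) and establish the identity $\Tr[S_R\Phi]=\|\Tr_{R^c}\ketbra{\varphi^{\mathbf{y}}}{\varphi^{\mathbf{x}}}\|_2^2=:g(R)\ge0$, with $R^c=\{1,\dots,n\}\setminus R$; that is, a partial swap against the two copies contracts to the squared Hilbert--Schmidt norm of the partially traced cross operator. Collecting coefficients, the overlap becomes $(d^2-1)^{-k}\,\Sigma$ with $\Sigma=\sum_{R\subseteq S_0}(-1)^{k-|R|}d^{|R|}g(R)$. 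To evaluate $\Sigma$, I expand $\tilde\Omega=\Tr_{S_0^c}\ketbra{\varphi^{\mathbf{y}}}{\varphi^{\mathbf{x}}}$ in a tensor-product orthonormal operator basis whose only non-traceless element is proportional to $I$; partial traces then retain exactly the identity components, so $g(R)=d^{k-|R|}\sum_{\mathrm{nz}(\vec\mu)\subseteq R}|c_{\vec\mu}|^2$. Substituting and resumming over $R$ via $\sum_T(-1)^{|T|}=0$ annihilates every Pauli component except those supported on all of $S_0$, leaving $\Sigma=d^k\sum_{\vec\mu:\,\mu_i\ne0\ \forall i}|c_{\vec\mu}|^2$.

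The proof then closes by bounding $\sum_{\mu_i\ne0\,\forall i}|c_{\vec\mu}|^2\le\|\tilde\Omega\|_2^2\le1$, the last step because for unit vectors the partial trace of $\ketbra{\varphi^{\mathbf{y}}}{\varphi^{\mathbf{x}}}$ is a product of the two coefficient matrices, whose Frobenius norms are one, so submultiplicativity yields $\|\Tr_{S_0^c}\ketbra{\varphi^{\mathbf{y}}}{\varphi^{\mathbf{x}}}\|_2\le1$. Hence the averaged overlap is at most $d^k/(d^2-1)^k\le(d-1)^{-k}$, as claimed (in fact slightly stronger). The main obstacle I anticipate is that $\varphi^{\mathbf{x}}$ and $\varphi^{\mathbf{y}}$ may be entangled across the $n$ channel uses, so the partial swaps do not factorize and $g(R)$ is genuinely global; the operator-basis expansion together with the binomial collapse of the alternating sum is precisely what neutralizes this entanglement and isolates the single controlling quantity $\|\tilde\Omega\|_2$.
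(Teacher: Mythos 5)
Your proposal is correct, and its first half coincides with the paper's own proof (Lemma~\ref{lem:littleoverlapappendix}): both rewrite the averaged overlap as a two-copy expectation and reduce it, by independence of the $n$ unitaries and the $2$-design property, to the single-site twirl $\mathbb{E}_{V}\, V^\dagger{\otimes}V^\dagger (Z^a{\otimes}Z^{-a})V{\otimes}V$, which for $a\neq 0$ equals $\tfrac{d}{d^2-1}F-\tfrac{1}{d^2-1}\mathbb{I} = \tfrac{\Pi_\mathrm{sym}}{d+1}-\tfrac{\Pi_\mathrm{anti}}{d-1}$ (your $\mathbb{I},F$ form and the paper's projector form are the same operator). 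Where you genuinely diverge is in how this twirled operator is contracted against the inputs. The paper finishes in one line: the contraction is a matrix element $\bra{\varphi^{\mathbf{y}}}\bra{\varphi^{\mathbf{x}}} S^{\mathbf{x}-\mathbf{y}} \ket{\varphi^{\mathbf{x}}}\ket{\varphi^{\mathbf{y}}}$ of a Hermitian operator, bounded in absolute value by $\| S^{\mathbf{x}-\mathbf{y}} \|_\infty = (d-1)^{-d_H(\mathbf{x},\mathbf{y})}$ via Lemma~\ref{lem:norms} and multiplicativity of the operator norm under tensor products; this bound is completely oblivious to any entanglement of $\ket{\varphi^{\mathbf{x}}},\ket{\varphi^{\mathbf{y}}}$ across channel uses, which is why the paper never needs the machinery you introduce. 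You instead evaluate the contraction exactly: the partial-swap expansion, the identity $\Tr[S_R\Phi]=\|\Tr_{R^c}\ketbra{\varphi^{\mathbf{y}}}{\varphi^{\mathbf{x}}}\|_2^2\geq 0$ (correct, since $\Tr[S_R(A\otimes B)]=\Tr[(\Tr_{R^c}A)(\Tr_{R^c}B)]$), and the binomial collapse in a traceless operator basis all check out, as does the final estimate $\|\Tr_{S_0^c}\ketbra{\varphi^{\mathbf{y}}}{\varphi^{\mathbf{x}}}\|_2=\|YX^\dagger\|_2\leq 1$ for the coefficient matrices $X,Y$. The payoff of your route is an exact formula for the average overlap (in terms of the full-support components of the cross operator) and the marginally tighter constant $\bigl(\tfrac{d}{d^2-1}\bigr)^{d_H(\mathbf{x},\mathbf{y})}\leq (d-1)^{-d_H(\mathbf{x},\mathbf{y})}$; the cost is length—the paper's operator-norm argument dispatches the entanglement issue you worried about in a single inequality, whereas your inclusion--exclusion analysis is what neutralizes it by hand.
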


Next we derive a lower bound on the output entropy of the environment,
by considering the R\'enyi-2 entropy of many copies of
$\rho_{E^n}^{\nfoldV}$ using Lemma \ref{lem:littleoverlap}.

\begin{lemma}\label{Lemma:Renyi2}
For an input given by Eq.~(\ref{eq:optin}),
conditioned on $\nfoldV$,
the output state on the environment
$\rho_{E^n}^{\nfoldV}  = \sum_{\mathbf{x}\in[d]^n} p_{\mathbf{x}} \rho^{\mathbf{x},\nfoldV}_{E^n}$
% (see Eq.~(\ref{eq:rhoExV})
satisfies
\begin{align}
\mathbb{E}_{\nfoldV}\Tr \left(\rho_{E^n}^{\nfoldV}\right)^2 \leq 2^n \sum_{\mathbf{x}} p_{\mathbf{x}}^2.
\label{eq:statement1}
\end{align}
and
\begin{align}
S(E^n|V_E^n)= \mathbb{E}_{\nfoldV} S \left( \rho_{E^n}^{\nfoldV}\right) \geq H(\mathbf{X}) - n.
\label{eq:statement2}
\end{align}
\end{lemma}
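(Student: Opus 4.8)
The plan is to treat the two displayed bounds in turn, using Lemma~\ref{lem:littleoverlap} for the purity bound and then bootstrapping it up to a von Neumann entropy bound for the second.

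For Eq.~(\ref{eq:statement1}) I would start by expanding the purity of the mixture. Since each $\rho_{E^n}^{\mathbf{x},\nfoldV}$ is pure, writing $\rho_{E^n}^{\nfoldV} = \sum_{\mathbf{x}} p_{\mathbf{x}}\rho_{E^n}^{\mathbf{x},\nfoldV}$ gives $\Tr(\rho_{E^n}^{\nfoldV})^2 = \sum_{\mathbf{x},\mathbf{y}} p_{\mathbf{x}}p_{\mathbf{y}}\Tr\rho_{E^n}^{\mathbf{x},\nfoldV}\rho_{E^n}^{\mathbf{y},\nfoldV}$. Averaging over $\nfoldV$ and inserting the overlap bound of Lemma~\ref{lem:littleoverlap} yields $\mathbb{E}_{\nfoldV}\Tr(\rho_{E^n}^{\nfoldV})^2 \le \sum_{\mathbf{x},\mathbf{y}} p_{\mathbf{x}}p_{\mathbf{y}}\,(d-1)^{-d_H(\mathbf{x},\mathbf{y})}$. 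The key observation is that this kernel factorizes across the $n$ coordinates: $(d-1)^{-d_H(\mathbf{x},\mathbf{y})} = \prod_{i=1}^n M_{x_i y_i}$, where $M$ is the $d\times d$ matrix with $1$ on the diagonal and $1/(d-1)$ off-diagonal, i.e. $M = \frac{d-2}{d-1}I + \frac{1}{d-1}J$ with $J$ the all-ones matrix. Hence the right-hand side is the quadratic form $v^{\mathsf T} M^{\otimes n} v$ with $v_{\mathbf{x}} = p_{\mathbf{x}}$. Diagonalizing $M$ (eigenvalue $2$ on the uniform vector, $\frac{d-2}{d-1}<1$ on its orthogonal complement) shows $M \preceq 2I$, so $M^{\otimes n}\preceq 2^n I$ and $v^{\mathsf T}M^{\otimes n}v \le 2^n\|v\|_2^2 = 2^n\sum_{\mathbf{x}}p_{\mathbf{x}}^2$, which is Eq.~(\ref{eq:statement1}).

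For Eq.~(\ref{eq:statement2}) the first move is the standard bound $S(\sigma)\ge -\log\Tr\sigma^2$ together with Jensen's inequality for the convex function $-\log$: $\mathbb{E}_{\nfoldV}S(\rho_{E^n}^{\nfoldV}) \ge -\log\mathbb{E}_{\nfoldV}\Tr(\rho_{E^n}^{\nfoldV})^2 \ge -\log\big(2^n\sum_{\mathbf{x}}p_{\mathbf{x}}^2\big)$. This already gives a conditional environment entropy of at least $H_2(\mathbf{X}) - n$, where $H_2(\mathbf{X}) = -\log\sum_{\mathbf{x}}p_{\mathbf{x}}^2$ is the classical collision entropy. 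The difficulty is that this is weaker than the claim, since $H_2(\mathbf{X}) \le H(\mathbf{X})$ and the gap can be as large as $\Theta(n\log d)$; I expect this $H_2 \to H$ upgrade to be the main obstacle.

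To close the gap I would run the collision bound on many i.i.d.\ copies and invoke the asymptotic equipartition property. Feeding $m$ copies of the input into $\cN_d^{\otimes nm}$ produces an input still of the form~(\ref{eq:optin}), now with classical distribution $p^{\times m}$ on $[d]^{nm}$; I would project this onto the $\delta$-typical set $T$ of $\mathbf{X}^{\otimes m}$ and renormalize, obtaining a distribution $p'$ supported on $T$ and a corresponding environment output $\tilde\rho^{\nfoldV^{(m)}}$, where $\nfoldV^{(m)}=(\nfoldV_1,\dots,\nfoldV_m)$ collects the $nm$ unitaries. Applying Eq.~(\ref{eq:statement1}) with $nm$ channels and using that every typical string obeys $p'_{\mathbf{x}^m}\le 2^{-m(H(\mathbf{X})-\delta)}/\mathrm{Pr}(T)$ gives $\mathbb{E}\,\Tr(\tilde\rho^{\nfoldV^{(m)}})^2 \le 2^{nm}\,2^{-m(H(\mathbf{X})-\delta)}/\mathrm{Pr}(T)$, hence $\mathbb{E}\,S(\tilde\rho^{\nfoldV^{(m)}}) \ge m\big(H(\mathbf{X})-n-\delta\big)+\log\mathrm{Pr}(T)$. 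Finally I would transfer this back to the single-copy quantity: the projected output is within trace distance $2(1-\mathrm{Pr}(T))$ of the product state $\bigotimes_k\rho_{E^n}^{\nfoldV_k}$ for every $\nfoldV^{(m)}$, so by the Fannes--Audenaert continuity bound and additivity of entropy, $\mathbb{E}\,S(\tilde\rho^{\nfoldV^{(m)}}) \le m\,\mathbb{E}_{\nfoldV}S(\rho_{E^n}^{\nfoldV}) + O\big((1-\mathrm{Pr}(T))\,m\big)$. Dividing by $m$ and letting $m\to\infty$ (so that $\delta\to0$ and $\mathrm{Pr}(T)\to1$) makes all error terms vanish per copy and yields $\mathbb{E}_{\nfoldV}S(\rho_{E^n}^{\nfoldV}) \ge H(\mathbf{X}) - n$, as claimed.
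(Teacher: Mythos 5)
Your proposal is correct, and for the harder half (Eq.~(\ref{eq:statement2})) it follows essentially the same route as the paper: feed $m$ i.i.d.\ copies into $\cN_d^{\otimes nm}$, truncate the classical mixture to the typical set, apply the purity bound to the truncated state, pass from R\'enyi-2 to von Neumann entropy via $S \geq S_2$ and Jensen, transfer back to the true product state with the Fannes--Audenaert continuity bound (your trace-distance estimate $2(1-\Pr(T))$ matches the paper's $2\epsilon$), and take $m\to\infty$, $\delta\to 0$. The paper phrases the typicality input as a lower bound on the collision entropy $S_2(\tilde{\mathbf{X}}_m)\geq mS(\mathbf{X})-3m\epsilon+2\log(1-\epsilon)$, whereas you use the min-entropy bound $\max p' \leq 2^{-m(H(\mathbf{X})-\delta)}/\Pr(T)$; these are interchangeable. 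Where you genuinely diverge is the proof of Eq.~(\ref{eq:statement1}): the paper groups the double sum by Hamming shells and uses Cauchy--Schwarz, $\sum_{\mathbf{x}} p_{\mathbf{x}}p_{\mathbf{x}+\mathbf{e}} \leq \sum_{\mathbf{x}} p_{\mathbf{x}}^2$, followed by the binomial count $\sum_{w}\binom{n}{w}(d-1)^{w}(d-1)^{-w} = 2^n$, while you observe that the kernel $(d-1)^{-d_H(\mathbf{x},\mathbf{y})}$ factorizes as $M^{\otimes n}$ with $M = \frac{d-2}{d-1}I + \frac{1}{d-1}J$ and bound the quadratic form by the operator norm, $M^{\otimes n} \preceq 2^n I$. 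Both yield the same constant $2^n$; your spectral argument is arguably cleaner and explains where $2^n$ comes from (the top eigenvalue of $M$, attained on the uniform vector, so the bound is saturated only near uniform $p_{\mathbf{x}}$), while the paper's counting argument is more elementary and needs no diagonalization. Your explicit flagging of the $H_2$-versus-$H$ gap as the main obstacle, and the observation that the naive single-shot bound only gives $H_2(\mathbf{X})-n$, is exactly the issue the paper's $m$-copy construction is designed to overcome.
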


\begin{proof} To prove the first statement Eq.~(\ref{eq:statement1}),
\begin{align}
\mathbb{E}_{\nfoldV}\Tr \left(\rho_{E^n}^{\nfoldV}\right)^2 &=
\mathbb{E}_{\nfoldV} \sum_{\mathbf{x},\mathbf{y}}p_{\mathbf{x}}p_{\mathbf{y}}
                    \Tr \left( \rho_{E^n}^{\mathbf{x},\nfoldV} \rho_{E^n}^{\mathbf{y},\nfoldV} \right)
\nonumber
\\
& \leq \sum_{\mathbf{x},\mathbf{y}}p_{\mathbf{x}}p_{\mathbf{y}} \; \frac{1}{(d{-}1)^{d_H(\mathbf{x},\mathbf{y})}}
\nonumber
\\
& = \sum_{w=0}^{n} \frac{1}{(d{-}1)^w}\sum_{\mathbf{x}}\sum_{\mathbf{y}| d_H(\mathbf{x},\mathbf{y})= w}p_{\mathbf{x}}p_{\mathbf{y}}
\nonumber
\\
% & = \sum_{w=0}^{n} \frac{1}{(d{-}1)^w}\sum_{\mathbf{x}} \sum_{|\mathbf{e}| = w} p_{\mathbf{x}}p_{\mathbf{x} + \mathbf{e}}
% \nonumber
% \\
& = \sum_{w=0}^{n} \frac{1}{(d{-}1)^w}\sum_{|\mathbf{e}| = w} \sum_{\mathbf{x}}  p_{\mathbf{x}}p_{\mathbf{x} + \mathbf{e}} \,,
\nonumber
\end{align}
where the first inequality follows from Lemma \ref{lem:littleoverlap}.
By the Cauchy-Schwartz inequality,
\[\sum_{\mathbf{x}}  p_{\mathbf{x}}p_{\mathbf{x} {+} \mathbf{e}}
\leq \sqrt{\sum_{\mathbf{x}}  p_{\mathbf{x}}^2} \sqrt{\sum_{\mathbf{x}}  p_{\mathbf{x}{+}\mathbf{e}}^2}\\
= \sum_{\mathbf{x}}  p_{\mathbf{x}}^2 .\]
We thus have
\begin{align}
& \mathbb{E}_{\nfoldV}\Tr \left(\rho_{E^n}^{\nfoldV}\right)^2 \leq
\sum_{w=0}^{n} \frac{1}{(d-1)^w}\sum_{|\mathbf{e}| = w} \sum_{\mathbf{x}}  p_{\mathbf{x}}^2
\nonumber
\\
\nonumber
& = \sum_{w=0}^{n} \frac{1}{(d-1)^w} \binom{n}{w} (d-1)^w \sum_{\mathbf{x}}  p_{\mathbf{x}}^2
= 2^n \sum_{\mathbf{x}}  p_{\mathbf{x}}^2
\,.
\end{align}
To prove the second statement Eq.~(\ref{eq:statement2}),
consider the case that we input $m$ copies of the state given
by Eq.~(\ref{eq:optin}) into $mn$ copies of the channel $\cN_d$.
Conditioning on the random unitaries of the channels, this
results in an output state $\rho_{E^{nm}}^{\nfoldV^m}=
\otimes_{i=1}^m \rho_{E^n}^{\nfoldV_i}$ on the environment.
The state $\rho_{E^{nm}}^{\nfoldV^m}$ involves a mixture over
$(\mathbf{X})^m$, which is $m$ i.i.d.\ copies of $\mathbf{X}$.
Let $\tilde{\mathbf{X}}_m$ be the restriction of $(\mathbf{X})^m$
to its typical set $\{\mathbf{x}^m:|-\frac{1}{m}\log
p_{\mathbf{x}^m}-S(\mathbf{X})| \leq \epsilon\}$.  Then the state
$\rho_{E^{nm}}^{\nfoldV^m}$ can be well approximated by
$\tilde{\rho}_{E^{nm}}^{\nfoldV^m}$, where the mixture is only
taken over $\tilde{\mathbf{X}}_m$. Specifically, using the properties
of typical sets (cf.~\cite{CT91}, for example), we can easily check
that for $m$ big enough $\|\rho_{E^{nm}}^{\nfoldV^m}-
\tilde{\rho}_{E^{nm}}^{\nfoldV^m}\|_1 \leq 2\epsilon$. Thus we have
\begin{align}
m \mathbb{E}_\nfoldV S \left(\rho_{E^n}^{\nfoldV} \right)
& = \mathbb{E}_{\nfoldV^m} S \left(\rho_{E^{nm}}^{\nfoldV^{m}}  \right) \label{eq:doubleblock} \\
& \geq \mathbb{E}_{\nfoldV^m} S \left(\tilde{\rho}_{E^{nm}}^{\nfoldV^{m}}\right) -
                                  \epsilon mn \log d-H((\epsilon, 1-\epsilon)), \nonumber
\end{align}
where for the inequality we have used the strengthened Fannes inequality~\cite{Aud07}.
Furthermore, Eq.~(\ref{eq:statement1}) applies to
$\tilde{\rho}_{E^{nm}}^{\nfoldV^{m}}$, so that
\begin{align}
\mathbb{E}_{\nfoldV^m} \Tr \left( \tilde{\rho}_{E^{nm}}^{\nfoldV^{m}} \right)^2
\leq 2^{mn} \sum_{\mathbf{x}^m \in \tilde{\mathbf{X}}_m } \Big(\frac{p_{\mathbf{x}^m}}
{\sum_{\mathbf{x}^m \in \tilde{\mathbf{X}}_m } p_{\mathbf{x}^m}}\Big)^2.
\label{eq:applystatement1}
\end{align}
By convexity of ${-}\log$, Eq.~(\ref{eq:applystatement1}) translates to
\begin{align}
\mathbb{E}_{\nfoldV^m}S_2 \left( \tilde{\rho}_{E^{nm}}^{\nfoldV^{m}} \right) \geq
S_2 (\tilde{\mathbf{X}}_m) - mn \, ,
\label{eq:renyiversion}
\end{align}
where the R\'enyi-2 entropy $S_2$ is defined as $S_2(\rho):=-\log \Tr \rho^2$.
Since $\tilde{\mathbf{X}}_m$ only has typical sequences, $S_2 (
\tilde{\mathbf{X}}_m )\, {\approx} \, m S(\mathbf{X})$. Here we only
need an inequality, specifically, for $m$ sufficiently large
\begin{align}
S_2(\tilde{\mathbf{X}}_m)\geq m S(\mathbf{X})-3m\epsilon+2\log(1-\epsilon),
\label{eq:renyi-2v1}
\end{align}
which can be easily confirmed using the properties of typical sets.
Finally, connecting Eqs.~(\ref{eq:doubleblock}) and
(\ref{eq:renyiversion}) by the well-known relation between R\'enyi
entropies, $S \geq S_2$, and also using Eq.~(\ref{eq:renyi-2v1}),
we have
\[\begin{split} \mathbb{E}_{\nfoldV}S \left(\rho_{E^n}^{\nfoldV} \right)
\geq S(\mathbf{X}) &- n - \epsilon \, n \log d - 3\epsilon  \\
                     &-\frac{1}{m}H((\epsilon,1-\epsilon))+\frac{2}{m}\log(1-\epsilon)\,. \end{split}\]
Taking $\epsilon \rightarrow 0$ gives the desired result.
\end{proof}

\noindent Together, $Q^{(1)}(\cN_d^{\otimes n}) \leq n$, so,
$Q(\cN_d) \leq 1$.

\vspace*{2ex}

When proving the upper bound on $Q$, we cannot assume apriori the
entropy of $B^n V_B^n$ is maximal for the optimal input, ruling out
the simpler path to show that the entropy of $E^n V_E^n$ is maximal.
Instead we have to show that $S(B^n V_B^n)-S(E^n V_E^n)$ is small for
all distributions.  Perhaps our technique has other applications.
Also Lemma \ref{Lemma:Renyi2} effectively converts a statement
concerning the R\'enyi-2 entropy into an analogue for the entropy for a
large family of states, which may be of interest elsewhere.

\vspace*{2ex}
\noindent {\bf Achievable Quantum Rate}
\vspace*{1ex}
% \section{Achievable Quantum Rate}

We have shown that $Q(\cN_d) \leq 1$, but could it actually be $0$?
It turns out not: by choosing a specific input state and evaluating
the associated coherent information, we can obtain the explicit lower
bound $Q(\cN_d) \geq (1{-}\gamma)\log_2 e \approx 0.61$ as $d
\,{\rightarrow} \, \infty$, where $\gamma$ $=$ $\lim_{t \rightarrow \infty}
\left( \sum_{k=1}^t {1 \over k} - \ln t \right)$ is the
Euler-Mascheroni constant.  In the appendix, we derive this
lower bound of the coherent information by considering the input
$\phi_{A_1A_2} = (\frac{I}{d})_{A_1} \otimes \proj{0}_{A_2}$.

\vspace*{2ex}
\noindent {\bf Discussion}
\vspace*{1ex}
% \section{Discussion}

In \cite{HHHO03} it was shown that privacy and
distillable entanglement can be different, indeed privacy can be
nonzero even for bound-entangled states.  What we have shown is
similar, but somewhat incomparable.  Our result is stronger in
that the separation is maximal, saturating Eq.~(\ref{Eq:UBQ}), but
it only applies to the channel case, implicitly not allowing
classical communication.  The two-way assisted quantum capacity
$Q_2(\cN_d)$ is maximal (not zero!) and equal to the private capacity $\log d$.
An open question is how big the separation can be in the two-way setting?
\\[2ex] \noindent {\bf Funding}
DL was supported by NSERC, DAS, CRC, and CIFAR, KL by NSF,
GS and JS by DARPA QUEST.

% Note that our proof didn't depend on the random $V$ being chosen over
% the Haar measure.  Any finite 2-design will serve just as well.

%\clearpage

%\bibliographystyle{alpha}

%\bibliographystyle{apsrev}
%\bibliography{shortq}

% \clearpage
\appendix

\section{Low quantum capacity}

\begin{lemma}
\label{lem:littleoverlapappendix}
Define, on $A_2^n$, the state
\begin{align}
\label{eq:rhoExV}
\rho^{\mathbf{x},\nfoldV}_{E^n} = Z^{\mathbf{x}}\nfoldV\proj{\varphi^{\mathbf{x}}} \nfoldV^\dagger Z^{-\mathbf{x}}.
\end{align}
Then,
\begin{align}
\mathbb{E}_{\nfoldV}\Tr \rho^{\mathbf{x},\nfoldV}_{E^n} \rho^{\mathbf{y},\nfoldV}_{E^n}\leq \frac{1}{(d-1)^{d_H(\mathbf{x},\mathbf{y})}},
\end{align}
where $d_H(\mathbf{x},\mathbf{y}) = |\{i| x_i \neq y_i\}|$ is the Hamming distance between $\mathbf{x}$ and $\mathbf{y}$.
\end{lemma}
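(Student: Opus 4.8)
The plan is to reduce the averaged Hilbert--Schmidt overlap to a single-use second-moment computation, exploiting that $\nfoldV$ and $Z^{\mathbf{x}}$ are product operators across the $n$ channel uses even though $\ket{\varphi^{\mathbf{x}}}$ need not be. First I would evaluate the overlap for fixed $\nfoldV$. Since $\rho^{\mathbf{x},\nfoldV}_{E^n}$ and $\rho^{\mathbf{y},\nfoldV}_{E^n}$ are rank one, writing $\ket{a}=Z^{\mathbf{x}}\nfoldV\ket{\varphi^{\mathbf{x}}}$ and $\ket{b}=Z^{\mathbf{y}}\nfoldV\ket{\varphi^{\mathbf{y}}}$ gives
\[
\Tr\rho^{\mathbf{x},\nfoldV}_{E^n}\rho^{\mathbf{y},\nfoldV}_{E^n}=|\inp{a}{b}|^2=\big|\bra{\varphi^{\mathbf{x}}}\nfoldV^\dagger Z^{\mathbf{z}}\nfoldV\ket{\varphi^{\mathbf{y}}}\big|^2,
\]
where $\mathbf{z}=\mathbf{y}-\mathbf{x}$ taken componentwise mod $d$, so that $z_i\neq 0$ precisely on the $d_H(\mathbf{x},\mathbf{y})$ coordinates where $x_i\neq y_i$.

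Next I would linearize the square with the standard doubling trick: setting $M=\nfoldV^\dagger Z^{\mathbf{z}}\nfoldV$,
\[
\mathbb{E}_{\nfoldV}\big|\bra{\varphi^{\mathbf{x}}}M\ket{\varphi^{\mathbf{y}}}\big|^2=\bra{\Psi}\,\mathbb{E}_{\nfoldV}\!\left[M\otimes M^\dagger\right]\ket{\Phi},
\]
with the unit vectors $\ket{\Phi}=\ket{\varphi^{\mathbf{y}}}\otimes\ket{\varphi^{\mathbf{x}}}$ and $\ket{\Psi}=\ket{\varphi^{\mathbf{x}}}\otimes\ket{\varphi^{\mathbf{y}}}$ on two copies of $A_2^n$. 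Because $\nfoldV=\bigotimes_i V_i$ and $Z^{\mathbf{z}}=\bigotimes_i Z_{z_i}$ are products and the $V_i$ are independent, after regrouping the $2n$ tensor factors into $n$ pairs the averaged operator factorizes as $\bigotimes_i G_i$, where $G_i=\mathbb{E}_{V_i}\!\left[(V_i\otimes V_i)^\dagger(Z_{z_i}\otimes Z_{-z_i})(V_i\otimes V_i)\right]$. This is where the hypothesis that $V$ comes from an exact unitary $2$-design enters: each $G_i$ is a balanced degree-$(2,2)$ moment, so the design average equals the Haar average.

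I would then evaluate $G_i$ with the second-moment (Weingarten) formula $\mathbb{E}_V(V^\dagger\!\otimes V^\dagger)A(V\otimes V)=c_I\,I+c_F\,\mathbb{F}$, where $\mathbb{F}$ is the swap and $c_I,c_F$ are fixed by $\Tr A$ and $\Tr(\mathbb{F}A)$. With $A=Z_{z_i}\otimes Z_{-z_i}$ one has $\Tr Z_{z_i}=d\,\delta_{z_i,0}$ and $\Tr(Z_{z_i}Z_{-z_i})=d$, so $G_i=I$ when $z_i=0$ and
\[
G_i=\frac{d\,\mathbb{F}-I}{d^2-1}=\frac{1}{d+1}\Pi_{+}-\frac{1}{d-1}\Pi_{-}
\]
when $z_i\neq 0$, where $\Pi_{\pm}$ project onto the symmetric and antisymmetric subspaces of $\mathbb{C}^d\otimes\mathbb{C}^d$. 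The decisive consequence is that the operator norm of each nontrivial factor is exactly $\|G_i\|=\frac{1}{d-1}$, while trivial factors have norm $1$.

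Finally, since the operator norm is multiplicative over tensor products, $\big\|\bigotimes_i G_i\big\|=(d-1)^{-d_H(\mathbf{x},\mathbf{y})}$, and Cauchy--Schwarz gives $\big|\bra{\Psi}\bigotimes_i G_i\ket{\Phi}\big|\leq\big\|\bigotimes_i G_i\big\|\,\|\Psi\|\,\|\Phi\|=(d-1)^{-d_H(\mathbf{x},\mathbf{y})}$, which is the claim (the left-hand side is nonnegative, being an average of $|\cdot|^2$). I expect the main obstacle to be conceptual rather than computational: the inputs $\ket{\varphi^{\mathbf{x}}}$ can be entangled across the $n$ uses, so a naive per-use factorization of the overlap itself is unavailable. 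The resolution is to push the tensor-product structure onto the \emph{averaged operator} $\mathbb{E}_{\nfoldV}[M\otimes M^\dagger]$---which does factorize because the $V_i$ are independent---and then bound its action on the arbitrary, possibly entangled, vectors $\ket{\Phi},\ket{\Psi}$ by the operator norm, so that the entanglement drops out entirely.
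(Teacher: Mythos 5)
Your proposal is correct and follows essentially the same route as the paper's proof: reducing the averaged overlap to a bilinear form $\bra{\Psi}\,\mathbb{E}_{\nfoldV}[M\otimes M^\dagger]\,\ket{\Phi}$ on two copies, factorizing the averaged operator across channel uses by independence of the $V_i$, evaluating each single-use factor via the second-moment (2-design) formula to get $\frac{1}{d+1}\Pi_{+}-\frac{1}{d-1}\Pi_{-}$, and finishing with multiplicativity of the operator norm plus Cauchy--Schwarz (the paper's Lemma~\ref{lem:norms}). The only differences are cosmetic: you parametrize the twirl as $c_I I + c_F \mathbb{F}$ rather than via the symmetric/antisymmetric projectors, and you make the 2-design justification explicit, but the computation and the bound are identical.
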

\begin{proof}

First note that
\begin{align}
& \mathbb{E}_{\nfoldV}\Tr\left[ \rho^{\mathbf{x},\nfoldV}_{E^n}\rho^{\mathbf{y},\nfoldV}_{E^n}\right]
\nonumber
\\
& =
\mathbb{E}_{\nfoldV} \Tr\left[
Z^{\mathbf{x}}\nfoldV\proj{\varphi^{\mathbf{x}}}\nfoldV^\dagger Z^{-\mathbf{x}}
Z^{\mathbf{y}}\nfoldV\proj{\varphi^{\mathbf{y}}} \nfoldV^\dagger Z^{-\mathbf{y}}
\right]
\nonumber
\\
& =
\mathbb{E}_{\nfoldV} ~
\<\varphi^{\mathbf{y}}|  \nfoldV^\dagger Z^{-\mathbf{y}} Z^{\mathbf{x}}\nfoldV |\varphi^{\mathbf{x}}\>
\<\varphi^{\mathbf{x}}|  \nfoldV^\dagger Z^{-\mathbf{x}} Z^{\mathbf{y}}\nfoldV  |\varphi^{\mathbf{y}}\>
\nonumber
\\
& =
\mathbb{E}_{\nfoldV} ~
\<\varphi^{\mathbf{y}}| \<\varphi^{\mathbf{x}}| \left( \nfoldV^\dagger Z^{\mathbf{x}-\mathbf{y}}\nfoldV \right)
\otimes \left( \nfoldV^\dagger Z^{\mathbf{y}-\mathbf{x}} \nfoldV \right) |\varphi^{\mathbf{x}}\>  |\varphi^{\mathbf{y}}\>
\nonumber
\\
& = \bra{\varphi^{\mathbf{y}}}\bra{\varphi^{\mathbf{x}}} S^{\mathbf{x}-\mathbf{y}} \ket{\varphi^{\mathbf{x}}}\ket{\varphi^{\mathbf{y}}},
\label{Eq:Sprod}
\end{align}
where
\begin{align}
S^{\mathbf{x}-\mathbf{y}} = \mathbb{E}_{\nfoldV}
\left[
\nfoldV^\dagger \! \otimes \! \nfoldV^\dagger
\left( Z^{\mathbf{x}-\mathbf{y}} \otimes Z^{\mathbf{y}-\mathbf{x}} \right)
\nfoldV \! \otimes \! \nfoldV
\right] \,.
\end{align}
We now evaluate $S^{\mathbf{x}-\mathbf{y}}$ which equals $\otimes_{i = 1}^n S^{x_i-y_i}$
for
\begin{align}
S^{x_i-y_i} = \mathbb{E}_{V} V^\dagger \otimes  V^\dagger \left( Z^{x_i-y_i} \otimes Z^{-(x_i-y_i)} \right) V \otimes V \,.
\nonumber
\end{align}
If $x_i-y_i = 0$, then $S^{0} = \mathbb{I}$.
If $x_i-y_i = a \neq 0$, then,
\begin{align}
& S^{a} = \mathbb{E}_{V} V^\dagger \otimes V^\dagger (Z^{a} \otimes Z^{-a}) V \otimes V \nonumber \\
& = \Tr \left[\Pi_\mathrm{sym} Z^a \! \otimes \! Z^{-a}\right] \frac{\Pi_\mathrm{sym}}{d_\mathrm{sym}}
  + \Tr \left[\Pi_\mathrm{anti} Z^a \! \otimes \! Z^{-a}\right] \frac{\Pi_\mathrm{anti}}{d_\mathrm{anti}} \nonumber
\end{align}
where $\Pi_\mathrm{sym}$ and $\Pi_\mathrm{anti}$ are
the projectors onto the symmetric and the antisymmetric subspaces.
More specifically, let $F = \sum_{i,j} = \ket{i}\ket{j}\bra{j}\bra{i}$ be the swap operator.  Then,
\begin{align}
\Pi_\mathrm{sym} & = \frac{1}{2}\left(\mathbb{I} + F \right)\\
\Pi_\mathrm{anti}& = \frac{1}{2}\left(\mathbb{I} - F \right).
\end{align}
Note that $\Tr [Z^a \! \otimes \! Z^{-a}] = 0$, and
\begin{align}
& \Tr \left[F (Z^a \! \otimes \! Z^{-a}) \right]  \nonumber \\ \nonumber
& ~~~~~ = \sum_{i,j}\omega^{ai- aj} \Tr\left[ \ket{i}\bra{j} \otimes \ket{j}\bra{i} \right] =  \sum_{i} 1 = d.
\end{align}
So, $\Tr \left[\Pi_\mathrm{sym} Z^a \! \otimes \! Z^{-a}\right] \, {=} \, \frac{d}{2}$,
$\Tr \left[\Pi_\mathrm{anti} Z^a \! \otimes \! Z^{-a}\right] \, {=} \, {-}\frac{d}{2}$.
Consequently,
\begin{align}
S^a  = \frac{d}{2}\left[ \frac{\Pi_\mathrm{sym}}{d_\mathrm{sym}} - \frac{\Pi_\mathrm{anti}}{d_\mathrm{anti}}\right]
	= \frac{\Pi_\mathrm{sym}}{d+1} - \frac{\Pi_\mathrm{anti}}{d-1} \,.
\label{eq:sa}
\end{align}

With Eqs.~(\ref{Eq:Sprod}) and (\ref{eq:sa}), we can finish the proof in two steps.
From Eq.~(\ref{eq:sa}),  $\| S^{a} \|_{\infty} = \frac{1}{d-1}$ and
$\| S^{\mathbf{x}-\mathbf{y}} \|_\infty = (d{-}1)^{{-}d_H(\mathbf{x},\mathbf{y})}$.
Furthermore, for any hermitian operator $H$ and unit vectors $|\psi\>$ and $|\phi\>$,
$|\bra{\psi} H \ket{\phi}| \leq \| H \|_\infty$ (see Lemma \ref{lem:norms}).
Applying these to Eq.~(\ref{Eq:Sprod}) gives
\[
\mathbb{E}_{\nfoldV}\Tr\left[ \rho^{\mathbf{x},\nfoldV}_{E^n}\rho^{\mathbf{y},\nfoldV}_{E^n}\right]
 \leq  \| S^{\mathbf{x}-\mathbf{y}} \|_{\infty} = \frac{1}{(d-1)^{d_H(\mathbf{x},\mathbf{y})}}.
\]
\end{proof}

\begin{lemma} \label{lem:norms}
For any hermitian operator $H$ and unit vectors $|\psi\>$ and $|\phi\>$,
$|\bra{\psi} H \ket{\phi}| \leq \| H \|_\infty$.
\end{lemma}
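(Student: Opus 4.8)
The plan is to reduce the statement to the Cauchy--Schwarz inequality combined with the defining property of the operator norm; no spectral decomposition is actually needed (in fact the same one-line argument works for any bounded operator, Hermitian or not). First I would recall the variational characterization of the operator norm, $\| H \|_\infty = \sup_{\|\xi\|=1}\|H\xi\|$, which is equivalent to the submultiplicative-type bound $\|H\ket{\phi}\| \leq \| H \|_\infty\,\|\phi\|$ valid for every vector $\ket{\phi}$.

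With that in hand, the whole argument is to regard $\bra{\psi}H\ket{\phi}$ as the inner product of $\ket{\psi}$ with the vector $H\ket{\phi}$, and then apply Cauchy--Schwarz:
\begin{align}
\left|\bra{\psi}H\ket{\phi}\right| = \left|\inp{\psi}{H\phi}\right| \leq \|\psi\|\,\|H\ket{\phi}\| \leq \|\psi\|\, \| H \|_\infty\, \|\phi\| = \| H \|_\infty \,, \nonumber
\end{align}
where the last equality uses that $\ket{\psi}$ and $\ket{\phi}$ are unit vectors. That is the complete proof.

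Should one instead want an argument that invokes Hermiticity explicitly, I would diagonalize $H = \sum_k \lambda_k \proj{e_k}$ in an orthonormal eigenbasis with real eigenvalues $\lambda_k$, expand $\bra{\psi}H\ket{\phi} = \sum_k \lambda_k \inp{\psi}{e_k}\inp{e_k}{\phi}$, bound each $|\lambda_k| \leq \max_j |\lambda_j| = \| H \|_\infty$, and then apply Cauchy--Schwarz to the remaining overlap sum, $\sum_k |\inp{\psi}{e_k}|\,|\inp{e_k}{\phi}| \leq \|\psi\|\,\|\phi\| = 1$. Since the lemma is an immediate consequence of Cauchy--Schwarz, there is genuinely no obstacle to surmount; the only point requiring a moment's care is citing the correct characterization of $\|\cdot\|_\infty$, and for brevity I would present the first proof.
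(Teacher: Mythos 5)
Your proposal is correct and follows essentially the same route as the paper: bound $\|H\ket{\phi}\|_2$ by $\|H\|_\infty$ and then apply Cauchy--Schwarz to $\inp{\psi}{H\phi}$. The only difference is that the paper derives the intermediate bound $\|H\ket{\phi}\|_2 \leq \|H\|_\infty$ explicitly from the spectral decomposition, whereas you cite it as the variational characterization of the operator norm (and your remark that Hermiticity is unnecessary is correct).
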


\begin{proof}
This is an elementary result in matrix analysis, but we provide a
proof for completeness.  Let $H = \sum_{i}\lambda_i |\mu_i\>\<\mu_i|$
be its spectral decomposition.  Then,
\begin{align}
\| H \ket{\phi} \|_2^2 & = \bra{\phi} H^\dagger H \ket{\phi} \nonumber \\
 & = \sum_{i}\lambda_i^2 \braket{\phi}{\mu_i} \braket{\mu_i}{\phi} \nonumber  \\
& \leq \lambda_{\rm max}^2 \sum_{i} \braket{\phi}{\mu_i} \braket{\mu_i}{\phi}
% \\ &
=  \| H \|_\infty^2 \,. \nonumber
\end{align}
Furthermore, by the Cauchy-Schwartz inequality,
\begin{align}
|\bra{\psi} H \ket{\phi}| \leq \| \bra{\psi}\|_2 \| H \ket{\phi}\|_2 \leq  \| H \|_\infty.
\end{align}
\end{proof}

\section{Average entropy of dephased random states}

The following lemma lets us evaluate the entropy of the environment when the input
state is $\phi_{A_1A_2} = (\frac{I}{d})_{A_1} \otimes\proj{0}_{A_2}$.

\begin{lemma}
  \label{lemma:average-entropy}
  Let $\ket{\varphi}$ be a normalized quantum state drawn randomly from
  the Hilbert space $\mathbb{C}^d$, according to the unitarily-invariant
  probability measure, i.e., $\ket{\varphi}=U\ket{0}$ with $U$ a Haar-random
  unitary operator. Let $\mathcal{P}$ be the completely dephasing quantum
  operation, namely, $\mathcal{P}(\rho):=\sum_{k=0}^{d-1}\bra{k}\rho\ket{k}\proj{k}$.
  We have
  \[
    \mathbb{E}_\varphi S(\mathcal{P}(\proj{\varphi}))=(\log e) (H_d-1),
  \]
  where $H_d:=\sum_{k=1}^d\frac{1}{k}$ is the $d\text{th}$ harmonic number.
\end{lemma}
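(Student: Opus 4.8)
The plan is to exploit that the completely dephasing map turns the random pure state into a diagonal (classical) state, so that the von Neumann entropy becomes an ordinary Shannon entropy of the measurement outcomes, and then to average using the known law of a single coordinate of a Haar-random vector. First I would note that $\mathcal{P}(\proj{\varphi})$ is diagonal in the computational basis with entries $p_k = |\braket{k}{\varphi}|^2$, so $S(\mathcal{P}(\proj{\varphi})) = -\sum_{k=0}^{d-1} p_k \log p_k$ is just the Shannon entropy of the probability vector $(p_0,\dots,p_{d-1})$. Since the Haar measure is invariant under permutations of the basis, each $p_k$ has the same distribution, and taking the expectation reduces the problem to a single coordinate: $\mathbb{E}_\varphi S(\mathcal{P}(\proj{\varphi})) = -d\,\mathbb{E}_\varphi[p_0 \log p_0]$.

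Next I would identify the distribution of $p_0 = |\braket{0}{\varphi}|^2$. Writing a Haar-random unit vector as a normalized tuple of i.i.d.\ standard complex Gaussians shows that $(p_0,\dots,p_{d-1})$ is uniform on the probability simplex, i.e.\ Dirichlet with all parameters equal to one; hence the marginal law of $p_0$ is $\mathrm{Beta}(1,d-1)$, with density $f(p) = (d-1)(1-p)^{d-2}$ on $[0,1]$. The remaining task is then the scalar computation $\mathbb{E}[p_0 \ln p_0] = (d-1)\int_0^1 p\,\ln p\,(1-p)^{d-2}\,dp$, which I would evaluate by differentiating the Beta integral $J(s) = \int_0^1 p^s (1-p)^{d-2}\,dp = \Gamma(s+1)\Gamma(d-1)/\Gamma(s+d)$ in $s$ and setting $s=1$, using $J'(s) = J(s)\,[\psi(s+1) - \psi(s+d)]$. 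With $J(1) = 1/[d(d-1)]$ and the digamma values $\psi(2) = 1-\gamma$ and $\psi(d+1) = H_d - \gamma$, the Euler-Mascheroni constants cancel, yielding $\mathbb{E}[p_0 \ln p_0] = (1 - H_d)/d$.

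Substituting back gives $\mathbb{E}_\varphi\!\left[-\sum_k p_k \ln p_k\right] = H_d - 1$ in nats, and converting to the units of $\log$ through the factor $\log e$ produces the claimed value $(\log e)(H_d - 1)$. The only genuine subtlety is the last computation: the harmonic number must be coaxed out cleanly, and this is exactly where recognizing the logarithmic derivative of the Beta function and invoking the identity $\psi(d+1) = H_d - \gamma$ does the work, with the cancellation of $\gamma$ confirming that the answer is unit-independent up to the overall $\log e$. Everything else---the diagonalization under $\mathcal{P}$, the permutation symmetry, and the simplex/Beta marginal---is routine.
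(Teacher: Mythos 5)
Your proof is correct, and it reaches the answer by a genuinely different route than the paper, even though both arguments ultimately funnel into the same Beta-type integral. Like the paper, you first use permutation symmetry to reduce the expected entropy to $-d\,\mathbb{E}[p_0\log p_0]$ with $p_0=|\braket{0}{\varphi}|^2$. From there the methods diverge in two places. First, to get the law of $p_0$, the paper derives it by hand: it parametrizes the complex unit sphere as the real sphere $\mathsf{S}^{(2d-1)}$, factors the volume element into $\mathrm{d}S_1\,\mathrm{d}S_2\,\mathrm{d}S_3$, and changes variables $t=\cos^2\theta$ to land on the density proportional to $(1-t)^{d-2}$; you instead invoke the standard probabilistic fact that $(p_0,\dots,p_{d-1})$ is $\mathrm{Dirichlet}(1,\dots,1)$ (uniform on the simplex), so $p_0\sim\mathrm{Beta}(1,d-1)$ --- the same density, obtained in one line. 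Second, to evaluate $\int_0^1 p\,(1-p)^{d-2}\ln p\,\mathrm{d}p$, the paper exhibits an explicit primitive of $t(1-t)^{d-2}$ and integrates by parts, which makes the harmonic sum $\sum_{k=2}^d\frac1k$ appear directly; you differentiate the Beta integral $J(s)=\Gamma(s+1)\Gamma(d-1)/\Gamma(s+d)$ in $s$ and use the digamma identities $\psi(2)=1-\gamma$, $\psi(d+1)=H_d-\gamma$, with the Euler--Mascheroni constants cancelling. Your version is shorter and more modular, at the cost of importing two standard external facts (the Dirichlet representation of Haar coordinates and the digamma values); the paper's version is longer but entirely self-contained, using nothing beyond multivariable calculus. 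Both are sound, and your scalar computation ($J(1)=\frac{1}{d(d-1)}$, $\mathbb{E}[p_0\ln p_0]=\frac{1-H_d}{d}$, hence $H_d-1$ nats, i.e.\ $(\log e)(H_d-1)$) checks out exactly.
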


\begin{proof}
Write $\ket{\varphi}=\sum_{k=0}^{d-1}(x_k+iy_k)\ket{k}=\ket{\varphi(\vec{v})}$,
with $\vec{v}:=(x_0, x_1,\ldots x_{d-1},y_0,  y_1,\ldots y_{d-1})^\mathrm{T}
\in \mathbb{R}^{2d}$ being a vector on the $(2d-1)$-dimensional unit sphere
$\mathsf{S}^{(2d-1)}$. For any unitary operator $V=A+iB$ acting on $\mathbb{C}^d$,
with $A$ and $B$ being the real part and imaginary part, respectively, it is
straightforward to check that
\[
  V\ket{\varphi(\vec{v})}=\ket{\varphi(O\vec{v})},
\]
where
\[
O=\begin{pmatrix} A & -B \\ B & A \end{pmatrix}
\]
is an orthogonal operator acting on $\mathbb{R}^{2d}$. Thus, letting
$\mu$ be the unitarily-invariant probability measure on the set of normalized pure
states, we find that, on the parameter set $\{\vec{v}|\vec{v}\in \mathsf{S}^{(2d-1)}\}$,
it translates to the orthogonally-invariant measure, which in turn is proportional to
the Euclidean volume of the corresponding portion on $\mathsf{S}^{(2d-1)}$. Denote the
volume of $\mathsf{S}^{(l)}$ as $S^{(l)}$, and the volume elements on $\mathsf{S}^{(l)}$ as
$\mathrm{d}S^{(l)}$. As a result of the above argument we have
\begin{equation}
  \label{eq:measure-volume-element}
  \mu(\mathrm{d}\varphi)\equiv \mu(\mathrm{d}\vec{v})=\frac{1}{S^{(2d-1)}}\mathrm{d}S^{(2d-1)}.
\end{equation}
Note $S(\mathcal{P}(\proj{\varphi}))\; {=}\; -\sum_{k=0}^{d-1}|\braket{k}{\varphi}|^2
\log |\braket{k}{\varphi}|^2$. By symmetry and Eq.~(\ref{eq:measure-volume-element}) we have
\begin{equation}\begin{split}
  \label{eq:average-entropy-1}
    &\mathbb{E}_\varphi S(\mathcal{P}(\proj{\varphi})) \\
%    &=-d\int|\braket{0}{\varphi}|^2\log |\braket{0}{\varphi}|^2\,\mathrm{d}\mu(\varphi)\\
    &=-d \; \frac{1}{S^{(2d-1)}}\int(x_0^2+y_0^2)\log(x_0^2+y_0^2)\,\mathrm{d}S^{(2d-1)} \,.
\end{split}\end{equation}
Let $\cos\theta=\sqrt{x_0^2+y_0^2}$, with $\theta\in[0, \frac{\pi}{2}]$.
Then $\sin\theta=\sqrt{\sum_{k=1}^{d-1}(x_k^2+y_k^2)}$. Fixing $\theta$, the changing of
$(x_0, y_0)$ forms a circle (one-dimensional sphere) of radius $\cos\theta$, and we
denote it as $\mathsf{S}_1$; the changing of $(x_1,x_2,\ldots x_{d-1}, y_1,y_2,\ldots y_{d-1})$
forms a $(2d-3)$-dimensional sphere $\mathsf{S}_2$ of radius $\sin\theta$. The volume
elements of these two spheres are thus, respectively,
\[\mathrm{d}S_1=\cos\theta\,\mathrm{d}S^{(1)}, \qquad
  \mathrm{d}S_2=\sin^{2d-3}\theta\,\mathrm{d}S^{(2d-3)}.
\]
On the other hand, fixing the other spherical coordinates of $\mathsf{S}_1$ and
$\mathsf{S}_2$, the changing of $\theta$ forms a quarter of a unit circle,
$\mathsf{S}_3:=\{(\cos\theta, \sin\theta)|0\leq\theta\leq\frac{\pi}{2}\}$. The volume
element of $\mathsf{S}_3$ is obviously
\[\mathrm{d}S_3=\mathrm{d}\theta.\]
We further observe that $\mathrm{d}S_1$, $\mathrm{d}S_2$ and $\mathrm{d}S_3$ are mutually
orthogonal on the big sphere $\mathsf{S}^{(2d-1)}$, because $\mathrm{d}S_1$ and
$\mathrm{d}S_2$ fall in two distinct orthogonal subspaces, respectively, and
$\mathrm{d}S_3$ falls in the radial directions of both $\mathsf{S}_1$ and
$\mathsf{S}_2$. Hence we have
\begin{eqnarray}
  \label{eq:volume-element}
  \mathrm{d}S^{(2d-1)}&=&\mathrm{d}S_1\,\mathrm{d}S_2\,\mathrm{d}S_3
\\
    &=&\cos\theta \, \sin^{2d-3}\theta \,\mathrm{d}\theta \,\mathrm{d}S^{(1)}\,\mathrm{d}S^{(2d-3)} \,.
\nonumber
\end{eqnarray}
Using Eq.~(\ref{eq:volume-element}),
and substituting $\cos^2\theta$ for $x_0^2+y_0^2$, we find that the right
hand side of
Eq.~(\ref{eq:average-entropy-1}) equals
\begin{equation}
  \label{eq:average-entropy-2}
  - d \frac{S^{(1)}S^{(2d-3)}}{S^{(2d-1)}} \int_{\theta=0}^{\frac{\pi}{2}}
     \!\!  \cos^3\!\theta  \, \sin^{2d{-}3}\!\theta \, \log (\cos^2\!\theta) \,\mathrm{d}\theta.
\end{equation}
The volume of the $(\ell{-}1)$-dimensional sphere is given by
$S^{(\ell{-}1)}= 2 \pi^{\frac{\ell}{2}}  / \Gamma (\frac{\ell}{2})$.
We are only concerned with even $\ell$, thus
$\Gamma (\frac{\ell}{2}) = \left( \frac{\ell}{2} - 1 \right)!$ and
$S^{(1)}S^{(2d-3)}/S^{(2d-1)} = 2 (d{-}1)$.
We also perform the change of variable $t=\cos^2\theta$,
so Eq.~(\ref{eq:average-entropy-2}) is equal to
\[
  -(\log e)d(d-1)\int_0^1t(1-t)^{d-2}\ln t \,\mathrm{d}t.
\]
Observing that a primitive of $t(1-t)^{d-2}$ is
\[
%  \frac{1}{d(d-1)}\bigl(1-(1-t)^d\bigr)-\frac{1}{d-1}t(1-t)^{d-1}
  \frac{1}{d(d-1)} \; t \; \sum_{k=0}^{d-1}(1-t)^k-\frac{1}{d-1} \;t \,(1-t)^{d-1}
\]
and the derivative of $\ln t$ is $\frac{1}{t}$, and integrating by parts,
we eventually obtain
\[\begin{split}
    & \mathbb{E}_\varphi S(\mathcal{P}(\proj{\varphi})) \\
    &=(\log e)\int_0^1 \left( \sum_{k=0}^{d-1}(1-t)^k-d(1-t)^{d-1} \right) \,\mathrm{d}t \\
%    &=(\log e)\bigl(\sum_{k=0}^{d-1}\frac{-1}{k+1}(1-t)^{k+1}+(1-t)^d\bigr)\Big|_0^1 \\
    &=(\log e)\sum_{k=2}^{d}\frac{1}{k},
\end{split}\]
concluding the proof.
\end{proof}

\end{document}